\documentclass[a4paper,12pt]{amsart}
\usepackage[]{amsmath}
\usepackage{amsfonts}
\usepackage{amssymb}
\usepackage{xcolor}
\usepackage{mathrsfs}
\usepackage[numbers,sort]{natbib}
\usepackage{natbib}
\usepackage{enumerate}
\usepackage[normalem]{ulem}

\usepackage{tikz}
\usetikzlibrary{shapes.geometric,positioning,calc,cd}
\usepackage{binarytree} 
\usepackage[margin=3cm]{geometry} 
\usepackage{hyperref}

\usepackage{algorithm}
\usepackage{algpseudocode}

\newtheorem{thm}{Theorem}[section]
\newtheorem{prop}[thm]{Proposition}

\theoremstyle{definition}

\usepackage{amsfonts}
\usepackage{graphicx}

\newcommand{\PP}{\mathbb P}
\newcommand{\EE}{\mathbb E}
\newcommand{\RR}{\mathbb R}
\newcommand{\VV}{\mathbb V}
\newcommand{\FF}{\mathbb F}

\title[Biodiversity indices incorporating extinction risk]{Properties of biodiversity indices that incorporate future extinction risk}

\author{Mike Steel, Kristina Wicke and Arne Mooers}
\address{MS:  Biomathematics Research Centre, University of Canterbury, Christchurch, New Zealand}

\address{KW: Department of Mathematical Sciences, New Jersey Institute of Technology, Newark, USA \& National Institute for Theory and Mathematics in Biology, Northwestern University and The University of Chicago, Chicago, IL, USA}

\address{AM: Department of Biological Sciences, Simon Fraser University, Vancouver, Canada}
\date{\today}

\begin{document}
\begin{abstract}
The loss of biodiversity due to the likely widespread extinction of species in the near future is a focus of current concern in conservation biology.  One approach to measure the impact of this  extinction is based on the predicted loss of phylogenetic diversity. These predictions have  become a focus of the Zoological Society of London's `EDGE2' program for quantifying biodiversity loss and involves considering the HED (heightened evolutionary distinctiveness) and HEDGE (heightened evolutionary distinctiveness and globally endangered) indices which are based on phylogenetic diversity on a tree. Here, we show how to generalise the HED(GE) indices by expanding their application to more general settings (to phylogenetic networks, to feature diversity on discrete traits, and to arbitrary biodiversity measures).  We  provide a simple and explicit description of the mean and, importantly, the variance of such measures, and illustrate our results by an application to the phylogeny and a small set of features for all 27 extant Crocodilians.
\end{abstract}

\keywords{Phylogenetic trees, networks, biodiversity indices, extinction}
\maketitle

Following Daniel P. Faith's seminal paper in 1992~(\cite{fai92}), phylogenetic diversity (PD) has become an increasingly prominent measure of biodiversity. Defined on a weighted phylogenetic tree, PD sums the branch lengths connecting a given subset of species, aiming to capture biodiversity more completely than species richness alone.

Preserving PD and, more broadly, the `Tree of Life' has become a central~(\cite{dia20, ver17}) though sometimes contested (\cite{guo, tuc19}) goal in conservation. Moreover, the concept of PD has inspired an extensive framework of PD-based metrics, measures that assign to each species a single numerical value representing its relative contribution to overall PD. These \emph{PD indices} provide a natural way to rank species according to their importance for preserving evolutionary history, and thus offer a simple basis for prioritisation decisions. Early examples include simple indices such as the \emph{fair proportion index} (\cite{red03,isa07}), also known as the \emph{evolutionary distinctiveness} score, which has been widely discussed in the literature and adopted in initiatives such as the initial ``Edge of Existence" (EDGE) program~(\cite{gum23, isa07}, see also \cite{red06}).

However, such simple indices have drawbacks: they typically do not account for variation in species’ extinction risks and can be sensitive to changes in the phylogeny, for instance following the extinction of other species (e.g., \cite{fis23,man24})). To address these limitations, extinction-aware indices incorporate the probability of species loss, thereby reflecting the expected contribution of each species to future PD. From this perspective, a species whose close relatives face high extinction risk carries greater responsibility for preserving shared evolutionary history than a species whose relatives are relatively secure. This idea underlies metrics such as the \emph{heightened evolutionary distinctiveness} (HED) and \emph{heightened evolutionary distinctiveness and globally endangered} (HEDGE) scores (\cite{ste07}), and has been incorporated in the updated EDGE2 framework (\cite{gum23}).
These metrics have been applied to quantify the potential future biodiversity loss under predicted extinction scenarios (e.g. the recent paper of \cite{for26} deals with the impact on the  flowering plant tree of life). 
Importantly, both extinction-informed PD and the EDGE metric have been adopted as indicators in the Kunming-Montreal Global Biodiversity Framework under multiple goals and targets (\cite{rob23}). This means they may be used both globally and nationally to report on progress towards official 2030 and 2050 goals and there exists an IUCN Species Survival Commission Task Force to aid in this work (\cite{gum22}).

In this paper, we generalise the HED and HEDGE scores of \cite{ste07} used in the EDGE2 approach, as well as the network-based extensions of PD introduced in \cite{wik18}.
Our extensions are in two directions. First, we  move beyond PD on trees and networks to more general diversity measures for assigning a score to any subset of species.  Some of these measures (e.g. feature diversity) share the properties that PD has of being submodular and monotone. However, these two properties could be overly restrictive.

For example, submodularity implies that the biodiversity score of a set $S$ of species (denoted $\varphi(S)$) is less than or equal to the sum of the scores of the individual species in $S$ (formally, $\varphi(S) \leq \sum_{x \in S} \varphi(x)$).  However, if the biodiversity score of $S$ incorporates positive ecological interactions (whereby `the whole is more than the sum of the parts') then this submodularity condition may well be inappropriate.

Similarly, monotonicity implies that the biodiversity score assigned by $\varphi$ to any subset $S$ of species never decreases if another species $x$ is added to $S$. However, if $\varphi$ is a measure of {\em future} biodiversity, and species $x$ is likely to cause the extinction of species in $S$, then monotonicity could also be problematic.

A second extension of the EDGE2 approach is to establish that the same type of product expression that arises for the EDGE2 score (an expected value of a random variable) can be applied to any biodiversity measure, generalising beyond phylogenetic diversity scores. We also provide an exact formula for the variance of the species-specific biodiversity metric underlying this generalisation of the EDGE2 score (Proposition~\ref{pro1}(ii)) and, in the case of feature diversity (of which PD is a special case), an explicit method for computing this variance (Proposition~\ref{pro2}).  This allows one to quantify the uncertainty associated with these scores, which is important when comparing species or prioritising conservation actions under stochastic extinction scenarios.  We apply our results to study the two EDGE2 indices (irreplaceability and expected gain following conservation) --  alongside their standard deviations -- in both the pure phylogenetic diversity (PD) and the more general feature diversity (FD) context for a small dataset of extant crocodiles. 

We begin by introducing some notation and recalling some earlier definitions.

\bigskip
\subsection{Definitions}
\label{def}

Let $X$ be any finite set whose elements we refer to as `species'. 
For each $x \in X$, each subset $S_x \subseteq X\setminus \{x\}$, each choice $I_x\in \{\emptyset, \{x\}\}$, and any function
$\varphi: 2^{X} \rightarrow \RR$, let
$$\Delta_\varphi(S_x,x) :=\varphi(S_x\cup\{x\}) - \varphi(S_x)$$
and
$$\Delta'_\varphi(S_x,x) :=\varphi(S_x\cup\{x\}) - \varphi(S_x \cup I_x).$$

When $\varphi$ is a biodiversity measure, $\Delta_\varphi(S_x,x)$ can be interpreted as the marginal gain in the diversity of $S_x$ obtained by actively conserving species $x$ versus its sure extinction. In contrast, $\Delta'_\varphi(S_x,x)$ measures the change in diversity of $S_x$ relative to a scenario in which $x$ is actively conserved and a scenario in which $x$ is not actively conserved and, therefore, may either go extinct ($I_x = \emptyset$) or remain extant ($I_x = \{x\}$).

Now consider these two indices when $S_x$ and $I_x$ are random variables. Specifically, suppose that each species $x \in X$ independently becomes extinct with probability $\epsilon_x$ (or not, with probability $1-\epsilon_x$). This model has been referred to as the  (generalised) field-of-bullets model of extinction, dating back to \cite{rau92}. We refer to it here as the {\em g-FOB model}. 
Thus, $I_x =\{x\}$ with probability $1-\epsilon_x$, and $I_x=\emptyset$ with probability $\epsilon_x$, and each other species $x' \in X\setminus \{x\}$ is (independently) present in the surviving set $S_x$ with probability $1-\epsilon_{x'}$.

For $x \in X$, let:
$$\Psi_x= \EE[\Delta_\varphi(S_x, x)] \mbox{ and } \Psi'_x = \EE[\Delta'_\varphi(S_x, x)].$$
When $\varphi$ is phylogenetic diversity on a tree, then $\Psi_x$ corresponds to the \emph{heightened evolutionary distinctiveness} (HED) score of species $x$, whereas $\Psi_x'$ is the \emph{heightened evolutionary distinctiveness and globally endangered} (HEDGE) score. By allowing $\varphi$ to be a more general function our approach here thereby generalises both these indices.

Both HED and HEDGE were originally defined in the context of PD by \cite{ste07} and correspond to the ED2 (irreplaceability) and EDGE2 (expected gain) metrics presented in \cite{gum23}.

In addition, we also consider the variance of these measures. Let
$$\Phi_x = \VV[\Delta_\varphi(S_x, x)]$$ and $$ \Phi'_x = \VV[\Delta'_\varphi(S_x, x)].$$

When $\varphi$ is some biodiversity measure applicable to any collection of species, 
the indices $\Psi_x$ and $\Psi_x'$ have a clear meaning. For example, 
$\Psi'_x$ is the expected additional biodiversity that results from protecting species $x$ (i.e. setting the extinction risk of $x$ to zero rather than leaving it at $\epsilon_x$) under a g-FOB model,  whereas $\Psi_x$ is the expected additional biodiversity (under the g-FOB model) that results from protecting species $x$ versus losing species $x$.

The importance of computing the variance values ($\Phi_x$ and $\Phi'_x$) is that they provide a measure of the extent to which the random variables $\Delta_\varphi(S_x,x)$ and $\Delta_\varphi'(S_x, x)$ are likely to deviate from their expected values $\Psi_x$ and $\Psi'_x$.  We will see below in our empirical (Crocodilian) data set that the standard deviation values exhibit wide variation across the species.

Notice also that although these indices associate a score to each single species, they take into account not only the extinction probability of this species but also of  other species in the data set.
Nonetheless, prioritising species according to their $\Psi$ and $\Psi'$ scores does not necessarily maximise summed diversity, however measured. In particular the top-$k$ species ranked by $\Psi$ or $\Psi'$ need not form a subset of $k$ species with the highest overall $\varphi$ score (see \cite{bor24, ste18,fai08} for relevant discussion). However, directly identifying subsets of species that maximise overall biodiversity presents its own challenges, including computational hardness, particularly in settings beyond simple PD.

\section{Expectations and variances of species-specific biodiversity metrics}

 Part (i) of the following result generalises a result from \cite{ste07} to allow the extension from the original setting of PD on trees to  any biodiversity measure (including, for example, PD on phylogenetic networks or, more generally, to feature diversity).  Part (ii) provides a measure of how well the generalised HEDGE score (i.e. the expected value  $\Psi'_x$ of $\Delta'_\varphi(S_x, x))$  approximates
$\Delta'_\varphi(S_x, x)$.
\label{exp}

\begin{prop}
\label{pro1}
\mbox{}
For \underline{any} function $\varphi:2^X \rightarrow \RR$, the following equalities hold under the g-FOB model:
\begin{itemize}
    \item[(i)] $\Psi'_x = \Psi_x \cdot \epsilon_x$.
    \item[(ii)]
    $\Phi'_x = \Phi_x\cdot \epsilon_x +\Psi_x^2\cdot \epsilon_x \cdot (1-\epsilon_x)$.
\end{itemize}
\end{prop}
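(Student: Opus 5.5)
The plan is to condition on the independent indicator $I_x$ and write the random variable $\Delta'_\varphi(S_x,x)$ as a product. Let $B$ be the Bernoulli random variable with $B=1$ if $I_x=\emptyset$ (probability $\epsilon_x$) and $B=0$ if $I_x=\{x\}$. When $I_x=\{x\}$ we have $\varphi(S_x\cup\{x\})-\varphi(S_x\cup I_x)=0$. When $I_x=\emptyset$ the difference equals $\Delta_\varphi(S_x,x)$. Hence, pointwise on the sample space,
$$\Delta'_\varphi(S_x,x) = B\cdot \Delta_\varphi(S_x,x).$$
The key structural fact is that under the g-FOB model $B$ is determined by the extinction event of $x$ alone. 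The random set $S_x\subseteq X\setminus\{x\}$ is determined by the events of the other species. So $B$ and $D:=\Delta_\varphi(S_x,x)$ are independent. Note that no property of $\varphi$ (monotonicity, submodularity) is used; this is why the result holds for \emph{any} $\varphi$.

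For part (i), independence gives $\EE[BD]=\EE[B]\,\EE[D]=\epsilon_x\Psi_x$, which is exactly $\Psi'_x=\Psi_x\epsilon_x$.

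For part (ii), I would compute $\VV[BD]=\EE[B^2D^2]-(\EE[BD])^2$. Since $B^2=B$ and $B$ is independent of $D^2$, $\EE[B^2D^2]=\epsilon_x\EE[D^2]=\epsilon_x(\Phi_x+\Psi_x^2)$. Subtracting $\epsilon_x^2\Psi_x^2$ gives
$$\Phi'_x=\epsilon_x\Phi_x+\Psi_x^2(\epsilon_x-\epsilon_x^2)=\Phi_x\epsilon_x+\Psi_x^2\epsilon_x(1-\epsilon_x),$$
as claimed. Equivalently, one could invoke the law of total variance conditioning on $I_x$; this gives the same two terms, the conditional variance term and the variance of the conditional mean.

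There is no genuine obstacle. The only step needing care is justifying the independence of $B$ and $S_x$ from the g-FOB model's independence assumption, together with the observation that $S_x$ excludes $x$ by definition. Everything is a finite sum over subsets, so all expectations exist.
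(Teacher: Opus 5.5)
Your proof is correct and follows essentially the same route as the paper, which conditions on $I_x$, uses that $\Delta'_\varphi(S_x,x)$ is $0$ when $I_x=\{x\}$ and equals $\Delta_\varphi(S_x,x)$ when $I_x=\emptyset$, and then computes $\EE[(\Delta')^2]=\epsilon_x\EE[\Delta^2]$ exactly as you do. Your factorisation $\Delta'=B\cdot\Delta$ with $B$ independent of $S_x$ makes explicit the independence that the paper's conditioning step uses only implicitly (the paper writes $\EE[\Delta'\mid I]=\Delta$ where $\EE[\Delta]$ is meant).
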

Notice that when $\epsilon_x=0$ we have $\Phi'_x=0$, and when $\epsilon_x=1$ we have $\Phi'_x = \Phi_x$, as expected.
We now present the short proof of Proposition~\ref{pro1}.

{\em Proof.}
  To ease notation, let $\Delta = \Delta_\varphi(S_x, x)$, let $\Delta':= \Delta'_\varphi(S_x,x)$ (as defined earlier) and let $I=I_x$.
  
{\em Part (i)} 
$\Psi'_x=\EE[\Delta'] = \EE[\EE[\Delta'|I]]$ and 
\begin{equation}
\label{helps}
    \EE[\Delta'|I] = \begin{cases}
    0, & \mbox{if $I=\{x\}$ (w.p. $1-\epsilon_x$),}\\
    \Delta,  & \mbox{if $I = \emptyset$ (w.p. $\epsilon_x$).}
\end{cases}
\end{equation}
Thus, $\Psi'_x= 0 \cdot (1-\epsilon_x) + \EE[\Delta] \cdot \epsilon_x = \Psi_x \cdot \epsilon_x.$

{\em Part (ii)}
We have:
\begin{equation}
    \label{helps2}
    \Phi'_x = \VV[\Delta'] = \EE[(\Delta')^2] -\EE[\Delta']^2.
\end{equation}
Moreover, 
  $$\EE[(\Delta')^2] = \EE[\EE[(\Delta')^2|I]]=\EE[\Delta^2]\cdot \epsilon_x= (\Phi_x +\Psi^2_x)\cdot \epsilon_x,$$
  where the second equation follows by observing that when $I_x=\{x\}$, we have $\Delta'=0$ and when $I_x=\emptyset$ (with probability $\epsilon_x$), $\Delta'=\Delta.$
From  Part (i), $\EE[\Delta']^2 = \Psi'^2_x = \Psi^2_x \cdot \epsilon^2_x$.
Substituting these equations into Eqn.~\eqref{helps2}  establishes the result.

{\bf Remarks:}

In Proposition~\ref{pro1}, the two terms on the right of Eqn. (ii) for the variance ($\Phi'_x$) have a natural interpretation. The first term is the expected value of the conditional variance of $\Delta_\varphi'(S_x, x)$ given $I_x$. The second term is the variance of the conditional expectation of $\Delta_\varphi'(S_x,x)$ given $I_x$. Notice that the first term grows linearly with $\epsilon_x$, whereas the second term depends quadratically on $\epsilon_x$ (increasing from 0 at $\epsilon=0$ to a maximal value of 0.25 at $\epsilon_x=\frac{1}{2}$ then decreasing to zero again at $\epsilon_x=1$).

A third index also arises, which generalises an analogous index in \cite{ste07}.
Let 
$$\Delta''_\varphi(S_x,x) :=\varphi(S_x\cup I_x) - \varphi(S_x),$$
and let
$\Psi_x'' = \EE[\Delta''_\varphi(S_x, x)],$ and
$\Phi_x'' = \VV[\Delta''_\varphi (S_x, x)].$
Notice that $\Delta_\varphi'(S_x,x) + \Delta_\varphi''(S_x,x)= \Delta_\varphi(S_x,x)$, and so, by linearity of expectation, $\Psi'_x + \Psi''_x = \Psi_x$ and so $\Psi''_x = \Psi_x \cdot (1-\epsilon_x).$
Moreover, since
$$\Delta_\varphi''(S_x, x) 
= \begin{cases}
   \Delta_\varphi(S_x, x), & \mbox{ w.p. } 1-\epsilon_x,\\
   0, & \mbox{ w.p. } \epsilon_x,
\end{cases}$$
we obtain (by the same argument as for $\Phi'_x$) that
$\Phi''_x = \Phi_x\cdot (1-\epsilon_x) +\Psi_x^2\cdot (1-\epsilon)\cdot \epsilon_x$.

\section{PD Indices}
\label{phylo}

Following \cite{wik21},  consider a rooted phylogenetic tree $T=(V,E)$ with leaf set $X$, and with an assignment $\ell = [\ell_e]$ of a non-negative edge length to each edge of $T$. For a subset $S$ of $X$, let $\varphi_{(T, \ell)}(S)$ be the sum of the lengths of those edges of $T$ that lie on the path from the root to a leaf in $S$. As an example, for the phylogenetic tree $T$ shown in Fig.~\ref{fig:tree-network} and the set $S = \{1,3\}$, we obtain $\varphi_{(T, \ell)}(S) = \ell_1 + \ell_3 + \ell_6 + \ell_8 + \ell_9$.
\begin{figure}[ht]
    \centering
    \includegraphics[width=0.7\linewidth]{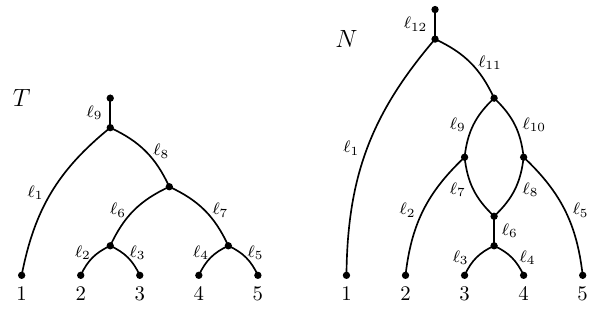}
   \caption{A rooted phylogenetic tree $T$ (left) and a rooted phylogenetic network $N$ (right) with leaf set $X = \{1,\ldots,5\}$ and non-negative edge lengths ($\ell_i \geq 0$ for each $i$). The edges are directed downwards.}
    \label{fig:tree-network}
\end{figure}

In this setting, the quantity $\varphi_{(T, \ell)}(S)$ is the widely-applied measure of PD of $S$, introduced by Daniel Faith in \cite{fai92}. Moreover, $\Psi$ and $\Psi'$ correspond to the HED and HEDGE indices, respectively, as studied in \cite{ste07} and applied in \cite{gum23}. Specifically, from \cite{ste07}, we have:

\begin{equation}
    \label{psieq}
    \Psi_x = \sum_{r=1}^{k_x} \ell(e_r)\cdot \left(\prod_{j \in C(e_r) \setminus\{x\}}\epsilon_j\right),
\end{equation}
where $e_1, \ldots, e_{k_x}$ are the edges on the path from the root of $T$ to the leaf $x$,  $C(e_j)$ is the set of leaves of $T$ that can be reached by a directed path from edge $e_j$ (note that we adopt the convention throughout this paper that $\prod_{j \in \emptyset}\epsilon_j=1$), and $\ell(e_r)$ is the length of edge $e_r$. Again referring to the tree $T$ in Fig.~\ref{fig:tree-network} and Eqn.~\eqref{psieq} we have, for example, 
\[\Psi_3 = \ell_9 \cdot \epsilon_1 \cdot \epsilon_2 \cdot \epsilon_4 \cdot \epsilon_5 + \ell_8 \cdot \epsilon_2 \cdot \epsilon_4 \cdot \epsilon_5 + \ell_6 \cdot \epsilon_2 + \ell_3.\]

Note that the values $\Psi_x$ are  quite different from PD indices such as `fair proportion' and `equal splits'. Those indices do not depend on $\epsilon_*$ values, and they satisfy the property that their values, summed over all species $x$, equal the total length of the tree (i.e. $\sum_e \ell_e$).

Observe also that (by Eqn.~\eqref{psieq} with the associated notation and  Proposition~\ref{pro1}(i)) we have: \begin{equation}
    \label{psieq2}
    \Psi'_x = \sum_{r=1}^{k_x} \ell(e_r)\cdot \left(\prod_{j \in C(e_r) }\epsilon_j\right).
\end{equation}

If instead of a phylogenetic tree, we consider a phylogenetic network (as shown on the right of Fig.~\ref{fig:tree-network}) then a natural choice for  $\varphi_{(N, \ell)}(S)$, and which generalises PD on trees, is  subnet-PD (the sum of the lengths of all edges of $N$ that lie on some path from the root to a leaf in $S$). We will describe how to calculate the quantities $\Psi_x$ and $\Phi_x$ for this index later, as this will follow directly from more general results that concern feature diversity.

\section{Extending PD to Feature Diversity}

The diversity function $\varphi$ has so far been arbitrary, except for one particular example, PD. We now consider an extension of PD to a more general setting (but still a special case of the framework introduced so far).

Following \cite{wik21}, let $\FF$ be a finite set whose elements we refer to as `features' (these  may correspond to traits, genomic  or morphological features, or any other discrete characteristics associated with the species). Suppose that each species $x \in X$ has an associated set of features $\alpha(x) \subseteq \FF$. Note that a given feature may be present in many species (or in just one or  none).  
Let $\nu:\FF \rightarrow {\RR}^{\geq 0}$ assign a non-negative value (i.e.  a weight or measure of significance) to each feature, and define $$\varphi_{(\FF, \nu)}: 2^X \rightarrow \RR^{\geq 0}$$ by:
$$\varphi_{(\FF, \nu)}(S) = \sum_{f \in \cup_{s \in S}\alpha(s)} \nu(f). $$ In words, $\varphi_{(\FF, \nu)}(S)$ is the sum of the values of all features present in at least one species in $S$. 

Note that PD on a rooted phylogenetic tree $T$ with leaf set $X$ is a special case of feature diversity if we take $\FF$ to be the set of edges of $T$ and $\nu$ to be the function that describes the length of each edge. In this setting, the feature set $\alpha(x)$ of a species $x \in X$ consists of the features corresponding to the edges from the root of $T$ to $x$.

\subsection{Properties of $\varphi_{(\FF, \nu)}$}
The function $\varphi_{(\FF, \nu)}$ satisfies the following properties: it is  monotone (i.e. $A\subseteq B$ implies $\varphi_{(\FF, \nu)}(A) \leq \varphi_{(\FF, \nu)}(B)$) and  submodular (i.e. for any two sets $A, B$, $\varphi_{(\FF, \nu)}(A\cup B)+\varphi_{(\FF, \nu)}(A\cap B)\leq \varphi_{(\FF, \nu)}(A)+\varphi_{(\FF, \nu)}(B)$). Also,   $\varphi_{(\FF, \nu)}(\emptyset) = 0$.
It follows that \[\varphi_{(\FF, \nu)}(S) \leq\sum_{x \in S}\sum_{f \in \alpha(x)} \nu(f),\] with equality precisely if the collection of feature sets $(\alpha(x): x\in S)$ are pairwise disjoint. 

Let us call any function  $\varphi:2^X \rightarrow \RR^{\geq 0}$ satisfying $\varphi(\emptyset)=0$ a {\em general diversity function}. If such a function can be written in the form $\varphi_{(\FF, \nu)}$ for some choice of $\FF$ and non-negative function $\nu$ it is referred to as a {\em weighted coverage function}. Such functions form a strict subset of the class of all monotone submodular functions $\varphi:2^X \rightarrow \RR^{\geq 0}$ with $\varphi(\emptyset)=0$.  
Not every  monotone submodular function can be described as a weighted coverage function. 
To see this, consider a set $X$ of size 3, and define $\varphi$ on $2^X$ by $\varphi(S) = |S|$ if $|S|\leq 2$ and $\varphi(X) = 2$. Then $\varphi$ is submodular, but a case analysis shows that it cannot be described as a weighted coverage function. 

A set $X$ of size 3 also suffices to show that (i) not every feature diversity function corresponds to PD on a tree and (ii) not every general diversity function is monotone and submodular.
To justify these two claims, let $X=\{x_1, x_2, x_3\}$.  For Part (i) take $\FF = \{f_1, f_2, f_3\}$, $\nu(f_i)=1$, for $i=1,2,3$, and feature assignment $\alpha(x_1) = \{f_1, f_2\}$,  $\alpha(x_2) = \{f_1, f_3\}$, and $\alpha(x_3) = \{f_2, f_3\}$.  Then $\varphi_{(\FF, \nu)}(S)=3$ for each set $S$ of size 2 as well as for the entire set $X$, and this cannot be realised by PD on any 3-leaf tree.

For Part (ii), take $\varphi(S)=|S|$ if $S \neq X$, and $\varphi(X)=4$. Take $A=\{x_1, x_2\}$ and $B=\{x_3\}$. Then submodularity requires $\varphi(A\cup B) + \varphi(A \cap B) \leq \varphi(A)+\varphi(B)$, however for our example this leads to the false inequality $4+0 \leq 2+1$.

The nesting of these classes is illustrated in Fig.~\ref{nest} (the diversity measures on phylogenetic networks are introduced in the second-to-last section.
Note that in Fig.~\ref{nest} `general diversity' includes but does not coincide with the {\em diversity} functions introduced by Bryant and Tupper~(\cite{bry12}) (the latter are monotone but may fail to be submodular).

\begin{figure}[htb]
    \centering
    \includegraphics[width=0.5\linewidth]{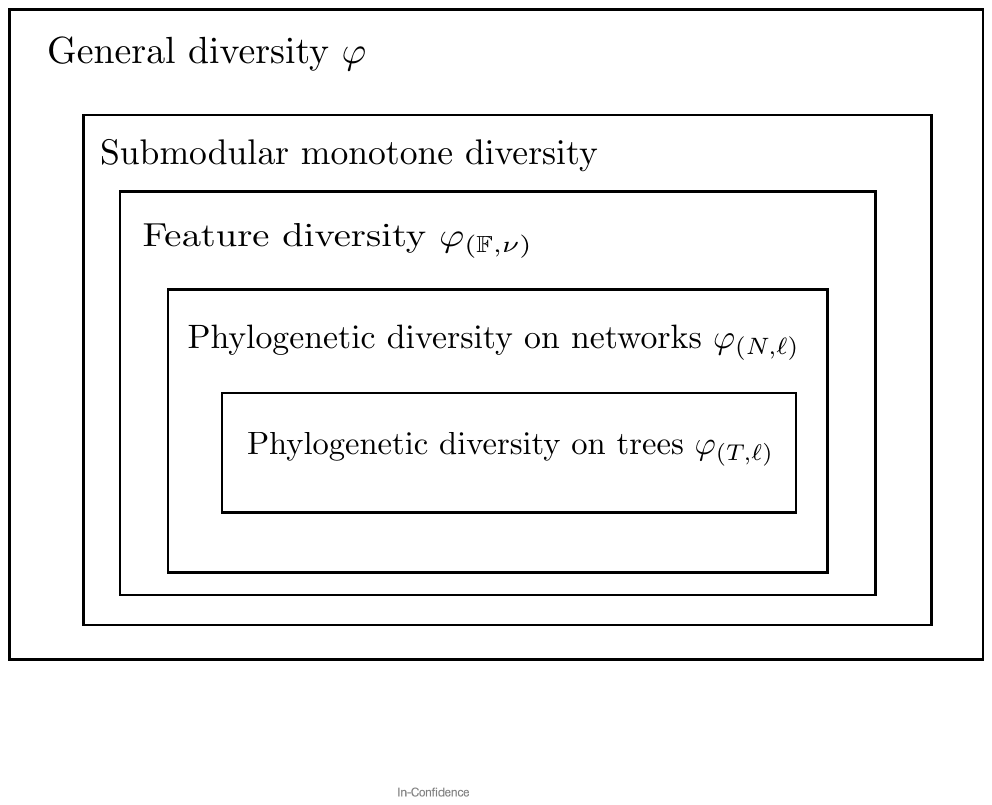}
    \caption{The nesting relationship between the different diversity measures described in this paper. We show later that two of these diversity classes (feature diversity and PD on networks) are equivalent as sets,  despite their quite different definitions. However, all other nestings are strict.}
    \label{nest}
\end{figure}

\bigskip

\subsection{Stochastic Properties of Feature Diversity Indices}

First, observe that if the random subset $S \subseteq X$ of species surviving an extinction event is modeled by the g-FOB model, then:
$$\EE[\varphi_{(\FF, \nu)}(S)] = \sum_{f \in \FF}\nu(f) \cdot (1-\prod_{x \in X: f \in \alpha(x)}\epsilon_x).$$
It can be shown that a  sufficient condition for $\varphi_{(\FF, \nu)}(S)/\sum_{f \in \FF} \nu(f)$ to be (asymptotically) close to its expected value   (i.e. $\EE[\varphi_{(\FF, \nu)}(S)]/\sum_{f \in \FF} \nu(f)$) as $|X|$ becomes large is provided by Proposition 2.3(c) of \cite{ove23}. Roughly speaking, this sufficient condition requires that the feature diversity of each species, divided by the average diversity of a randomly chosen species, is either bounded or it grows no faster than $\sqrt{|X|}.$

We turn now to the computation of $\Psi_x$ and $\Phi_x$ in the feature diversity setting.
The following result generalises Theorem 3.1 from \cite{ste07}.
\begin{prop}
\label{pro2}
Given a finite set of features $\FF$, a function $\alpha: X \rightarrow2^{\FF}$ that describes the features associated with each species,  and a weight function on features $\nu: \FF \rightarrow \RR^{\geq 0}$, let $\varphi= \varphi_{(\FF, \nu)}$. For features $f,f' \in \FF$ (allowing $f=f'$), let $$W_{x, ff'} = \{w \in X\setminus \{x\}: f \in \alpha(w) \mbox{ or }f' \in \alpha(w)\}.$$
Under a g-FOB model of extinction, the following equations hold for each species $x \in X$:

    $$\Psi_x = \sum_{f \in \alpha(x)} \nu(f) \prod_{y\in X\setminus \{x\}: f\in \alpha(y)} \epsilon_y$$
 and 
 $$\Phi_x  =  \left(\sum_{(f,f') \in \alpha(x)\times \alpha(x)} \nu(f)\nu(f') \prod_{y\in W_{x, ff'}} \epsilon_y\right) - \Psi_x^2.$$
    The corresponding values of $\Psi'_x$ and $\Phi'_x$ can then be obtained from Proposition~\ref{pro1}.

\end{prop}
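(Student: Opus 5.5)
The plan is to write $\Delta_\varphi(S_x,x)$ as a sum of indicator variables, one for each feature of $x$, and then use independence under the g-FOB model. For $f \in \FF$, let $Z_f$ be the indicator of the event that no species of $S_x$ carries $f$, that is, $f \notin \bigcup_{s\in S_x}\alpha(s)$. Adding $x$ to $S_x$ increases $\varphi_{(\FF,\nu)}$ by exactly the weights of those features of $x$ not already covered by $S_x$. Hence, pointwise,
$$\Delta_\varphi(S_x,x) = \sum_{f\in\alpha(x)} \nu(f)\, Z_f.$$
This identity is deterministic and follows directly from the definition of $\varphi_{(\FF,\nu)}$ as a sum over the union of feature sets.

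For the expectation, $Z_f=1$ exactly when every $y\in X\setminus\{x\}$ with $f\in\alpha(y)$ is absent from $S_x$. Under the g-FOB model these absences are independent events with probabilities $\epsilon_y$. Therefore
$$\EE[Z_f]=\prod_{y\in X\setminus\{x\}:\, f\in\alpha(y)}\epsilon_y,$$
with the empty-product convention covering features unique to $x$. Linearity of expectation then gives the stated formula for $\Psi_x$.

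For the variance, use $\Phi_x=\EE[\Delta^2]-\Psi_x^2$. Expanding the square gives
$$\Delta^2=\sum_{(f,f')\in\alpha(x)\times\alpha(x)}\nu(f)\nu(f')\,Z_fZ_{f'}.$$
The product $Z_fZ_{f'}$ is the indicator that every $w\neq x$ carrying $f$ or $f'$ is extinct, i.e. that every species in $W_{x,ff'}$ is absent. By independence, $\EE[Z_fZ_{f'}]=\prod_{y\in W_{x,ff'}}\epsilon_y$. The diagonal terms $f=f'$ are consistent with this, since $Z_f^2=Z_f$ and $W_{x,ff}$ is exactly the set of species other than $x$ that carry $f$. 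Linearity of expectation then yields the expression for $\Phi_x$. The statements for $\Psi'_x$ and $\Phi'_x$ follow immediately from Proposition~\ref{pro1}, which holds for any $\varphi$.

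There is no serious obstacle. The only step needing care is the pointwise identity for $\Delta_\varphi$, in particular that a feature shared by $x$ and some surviving species contributes nothing. A secondary point is that the product over a union $W_{x,ff'}$ counts each species once, which is what makes $\EE[Z_fZ_{f'}]$ a single product rather than a product of two overlapping products.
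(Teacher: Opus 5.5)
Your proposal is correct and follows the paper's own proof: the paper likewise writes $\Delta_\varphi(S_x,x)=\sum_{f\in\alpha(x)}\nu(f)\,I_x(f)$ with the same indicators as your $Z_f$, takes expectations using independence under the g-FOB model, and then expands the square and evaluates $\PP(I_x(f)=1\cap I_x(f')=1)=\prod_{y\in W_{x,ff'}}\epsilon_y$. Your extra remarks on the diagonal terms and on the union $W_{x,ff'}$ counting each species once are accurate and only make explicit what the paper leaves implicit.
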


{\em Proof.}
For any set $S_x \subseteq X\setminus\{x\}$, we have: $$\Delta_\varphi(S_x,x) = \sum_{f \in \alpha(x)\setminus \cup_{s \in S_x} \alpha(s)}\nu(f) = \sum_{f \in \alpha(x)} \nu(f) \cdot I_x(f),$$
    where $I_x(f)=1$ if $f \not\in \bigcup_{s \in S_x} \alpha(s)$, and $I_x(f)=0$ otherwise. 
 Under the g-FOB model we have:
$$\Psi_x = \EE[\Delta_\varphi(S_x,x)] = \sum_{f \in \alpha(x)} \nu(f) \cdot\PP\left(f \not\in \bigcup_{s \in S_x} \alpha(s)\right).$$
Here, the event $f \not\in \bigcup_{s \in S_x} \alpha(s)$ is precisely the event that all the species in $X\setminus\{x\}$ that contain feature $f$ go extinct. Under the g-FOB model, this event has probability $\prod_{y \in X\setminus \{x\}: f \in \alpha(y)} \epsilon_y$, which gives the claimed expression for $\Psi_x$.

For $\Phi_x$, we have:
\begin{equation}
    \label{vare}
    \Phi_x = \EE[\Delta_\varphi(S_x,x)^2] -\Psi_x^2
\end{equation}
and $$\Delta_\varphi(S_x,x)^2 = \sum_{(f,f') \in \alpha(x) \times \alpha(x)}  \nu(f)\nu(f') \cdot I_x(f)I_x(f').$$

Thus, $$\EE[\Delta_\varphi(S_x,x)^2] = \sum_{(f,f') \in \alpha(x) \times \alpha(x)}\nu(f)\nu(f') \cdot \PP(I_x(f)=1 \cap I_x(f')=1)$$
and
$$\PP(I_x(f)=1 \cap I_x(f')=1)=\prod_{y\in W_{x, ff'}}\epsilon_y.$$
The result now follows from Eqn.~\eqref{vare}.

\subsection{Example}
To illustrate the computation of $\Psi_x$, $\Psi'_x$,  $\Phi_x$, and $\Phi'_x$ in a simple setting involving feature diversity (but which cannot be modeled using PD on a tree), consider three species ($1,2,3$) with their associated feature sets $\alpha_1 = \{f_1, f_2\}$, $\alpha_2=\{f_2, f_3\}$, and $\alpha_3=\{f_1, f_3\}$. 
In that case, we have (for example),
$\Psi_1 = \nu(f_1)\epsilon_3+\nu(f_2)\epsilon_2$.

Next, for simplicity, set $\nu(f_i) =1$ for each of the three features and take the three species extinction probabilities  ($\epsilon_i$) to be a constant value $\epsilon$.
Applying Propositions~\ref{pro1} and \ref{pro2}, we obtain (for each species) the mean and variance quantities as follows:

\begin{equation}
    \label{nunu}
    \Psi_i = 2\epsilon \mbox{ and } \Psi'_i = 2\epsilon^2, \mbox{ and} 
\end{equation}
\begin{equation}
    \label{nunu2}
   \Phi_i = 2 \epsilon (1-\epsilon) \mbox{ and } \Phi_i' = 2 \epsilon^2 (1+\epsilon-2\epsilon^2) 
\end{equation}
In this simple example, the extinction probability that maximises the two variance quantities is 0.5 and $\sim$0.72, respectively.

Suppose  we now add a fourth species,  with $\alpha_4 = \{f_3, f_4\}$, and again set $\epsilon_4=\epsilon$, and $\nu(f_4)=1$. Unlike the other three species,  this fourth species has a feature that is `rare' (not present in the other species) and $\Psi_4 = 1+\epsilon^2$ and $\Psi'_4 = \epsilon(1+\epsilon^2)$. It follows that this fourth species has the highest $\Psi$  score for any $\epsilon \neq 1$ 
and the highest $\Psi'$ score for any $\epsilon \neq 0, 1$.

\subsection{The Variance $\Phi_x$ for a Tree with Edge Lengths}
As noted earlier, PD can be regarded as a special case of feature diversity by taking $\FF$ to be the set of edges of a tree $T$, and letting $\nu(e) = \ell(e)$ (the length of edge $e \in \FF$).  Then $\Psi_x$ is given by 
Eqn.~\eqref{psieq}, and the expression for $\Phi_x$ from Proposition~\ref{pro2} can be written more explicitly as follows. Let $e_1, \ldots, e_k$ be the edges on the path from the  root of $T$ to leaf $x$.
$$\Phi_x = \sum_{r=1}^{k_x}\sum_{s=1}^{k_x} \ell(e_r)\ell(e_s)\left(\prod_{j \in C(e_{\min\{r,s\}})\setminus\{x\}}\epsilon_j \right)- \Psi_x^2$$
where $C(e_*)$ denotes the set of leaves that can be reached by a directed path from $e_*$ (and again adopting the convention that $\prod_{j \in \emptyset}\epsilon_j=1$).
For example, returning to the tree $T$ depicted in Fig.~\ref{fig:tree-network}, we have already calculated
\[\Psi_3 = \ell_9 \cdot \epsilon_1 \cdot \epsilon_2 \cdot \epsilon_4 \cdot \epsilon_5 + \ell_8 \cdot \epsilon_2 \cdot \epsilon_4 \cdot \epsilon_5 + \ell_6 \cdot \epsilon_2 + \ell_3.\]
This leads to
\begin{align*}
    \Phi_3 &= \bigg( \ell_9^2 \cdot \epsilon_1 \cdot \epsilon_2 \cdot \epsilon_4 \cdot \epsilon_5 + \ell_8^2 \cdot \epsilon_2 \cdot \epsilon_4 \cdot \epsilon_5 + \ell_6^2 \cdot \epsilon_2 + \ell_3^2 \cdot 1 \phantom{\bigg)} \\
    &\quad \phantom{\bigg(} + 2 \cdot \ell_9 \cdot \ell_8 \cdot \epsilon_1 \cdot \epsilon_2 \cdot \epsilon_4 \cdot \epsilon_5 + 2 \cdot \ell_9 \cdot \ell_6 \cdot \epsilon_1 \cdot \epsilon_2 \cdot \epsilon_4 \cdot \epsilon_5 +  2 \cdot \ell_9 \cdot \ell_3 \cdot \epsilon_1 \cdot \epsilon_2 \cdot \epsilon_4 \cdot \epsilon_5  \phantom{\bigg)} \\
    &\quad \phantom{\bigg(} +  2 \cdot \ell_8 \cdot \ell_6 \cdot \epsilon_2 \cdot \epsilon_4 \cdot \epsilon_5 + 2 \cdot \ell_8 \cdot \ell_3 \cdot \epsilon_2 \cdot \epsilon_4 \cdot \epsilon_5  + 2 \cdot \ell_6 \cdot \ell_3 \cdot \epsilon_2 \bigg)\\
    &\quad- (\ell_9 \cdot \epsilon_1 \cdot \epsilon_2 \cdot \epsilon_4 \cdot \epsilon_5 + \ell_8 \cdot \epsilon_2 \cdot \epsilon_4 \cdot \epsilon_5 + \ell_6 \cdot \epsilon_2 + \ell_3)^2, \\
\end{align*}
which, for example, is approximately equal to $0.714844$ if $\ell_i = 1$ and $\epsilon_i = 0.5$ for all $i$.

\subsection{$\Psi_x$ as a Shapley Value for Feature Diversity}

In the simple field-of-bullets model, all extinction probabilities are equal (i.e.  $\epsilon_x = \epsilon$ for all species $x \in X$).
If this extinction risk $\epsilon$ is assumed to be uniformly distributed between $0$ and $1$, then the expected value of $\Psi_x$ with respect to this prior, denoted $\overline{\Psi_x}$, is given by:
\begin{equation}
\label{psi}
    \overline{\Psi_x} = \sum_{f \in \alpha(x)} \nu(f)/N(f),
\end{equation}
where $N(f)$ is the total number of species in $X$ that contain the feature $f$. The proof of this claim follows the same argument as that used to establish Eqn. (3.7) in \cite{ste25}.

The right-hand side of Eqn.~\eqref{psi} is precisely the Shapley value of species $x$ in the cooperative game in which the total feature diversity score is allocated among species according to their marginal contributions, as stated in Proposition 2 of \cite{wik21} (a more direct proof uses the generalisation of Theorem 1 of \cite{cor18} described in the conclusion section of that paper).
Thus Eqn.~\eqref{psi} provides a direct interpretation of $\overline{\Psi}_x$ as a Shapley value, where the features are shared out `fairly'.

Eqn.~\eqref{psi} also leads to a further identity. 
Let $Av(\overline{\Psi_x})$ be the average value of $\overline{\Psi_x}$ over all $n$ species. Then $Av(\overline{\Psi_x})$ is independent of how the features are distributed across species; it depends solely on the number $n$ of species and the total feature diversity present. We state this more formally as follows.

\begin{prop}
For the simple field-of-bullets model described above,
    $$Av(\overline{\Psi_x}) =\frac{1}{n} \sum_{f: N(f)>0}\nu(f).$$
\end{prop}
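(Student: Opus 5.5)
The plan is to start from Eqn.~\eqref{psi}, which we take as given: $\overline{\Psi_x} = \sum_{f \in \alpha(x)} \nu(f)/N(f)$. We then average over the $n$ species and swap the order of summation, so that the sum runs over features instead of species. Concretely, we write
$$\sum_{x \in X}\overline{\Psi_x} = \sum_{x\in X}\sum_{f\in\alpha(x)} \frac{\nu(f)}{N(f)} = \sum_{f \in \FF}\ \sum_{x \in X:\, f \in \alpha(x)} \frac{\nu(f)}{N(f)}.$$
This is just a reindexing of the finite set of incidence pairs $\{(x,f): x\in X,\ f\in\alpha(x)\}$.

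For a fixed feature $f$, the inner sum has exactly $N(f)$ terms, by the definition of $N(f)$ as the number of species containing $f$. Each term equals $\nu(f)/N(f)$, so whenever $N(f)>0$ the inner sum collapses to $\nu(f)$. When $N(f)=0$, the inner sum is empty and contributes $0$. This is also why restricting to features with $N(f)>0$ is needed: it makes the expression $\nu(f)/N(f)$ well defined, since it only ever appears when $N(f)\ge 1$. Dividing the result by $n$ then gives the stated formula. The right-hand side equals $\frac{1}{n}\varphi_{(\FF,\nu)}(X)$, which explains the remark that the average depends only on $n$ and the total feature diversity.

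The only point needing care is the handling of features with $N(f)=0$. We also note that Eqn.~\eqref{psi} itself rests on $\int_0^1 \epsilon^{N(f)-1}\,d\epsilon = 1/N(f)$, using Proposition~\ref{pro2} with all $\epsilon_y=\epsilon$; there are $N(f)-1$ other species carrying $f$ when $f\in\alpha(x)$. We will cite this step rather than redo it.

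An alternative route uses the Shapley interpretation: Shapley values are efficient, so they sum to $\varphi(X)-\varphi(\emptyset)=\varphi_{(\FF,\nu)}(X)$. We will use the direct double-counting argument as the proof and mention the Shapley route only as a remark.
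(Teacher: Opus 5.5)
Your proof is correct and is essentially the paper's argument: substitute Eqn.~\eqref{psi}, exchange the sums over species and features, and note that for each feature with $N(f)>0$ the inner sum $\sum_{x:\, f\in\alpha(x)} 1/N(f)$ equals $1$. Your extra remarks are accurate but not needed for the proof: features with $N(f)=0$ contribute nothing, and the right-hand side equals $\frac{1}{n}\varphi_{(\FF,\nu)}(X)$, consistent with Shapley efficiency.
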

{\em Proof.}
$$Av(\overline{\Psi_x}) =  \frac{1}{n} \sum_x \sum_{f\in \alpha(x)}\frac{\nu(f)}{N(f)} =  \frac{1}{n}  \sum_{f: N(f)>0}\nu(f)\cdot \sum_{x: f \in \alpha(x)} \frac{1}{N(f)},$$
and the last summation term equals 1 for each feature $f$.

\subsection{Example Application}

We apply the calculation of $\Psi_x$ and $\Psi'_x$ and their variances ($\Phi_x$ and $\Phi'_x$) on the phylogeny of the Order Crocodilia (alligators, crocodiles, and gharials), a small, isolated clade of endangered reptiles. Over half of the 27 species are of conservation concern, with 7 listed as critically endangered by the International Union for Conservation of Nature (IUCN). We took a time-calibrated phylogeny of the group and each species' corresponding conservation status (from \cite{col20}) and converted those statuses to $\epsilon_x$, following \cite{gum23}. We then calculated the values of $\Psi_x$ and $\Psi'_x$ and their associated standard deviations $\sqrt{\Phi_x}$ and $\sqrt{\Phi'_x}$. These values are presented in Fig.~\ref{fig:croc}.

\begin{figure}[ht]
    \centering
    \includegraphics[width=1.0\linewidth]{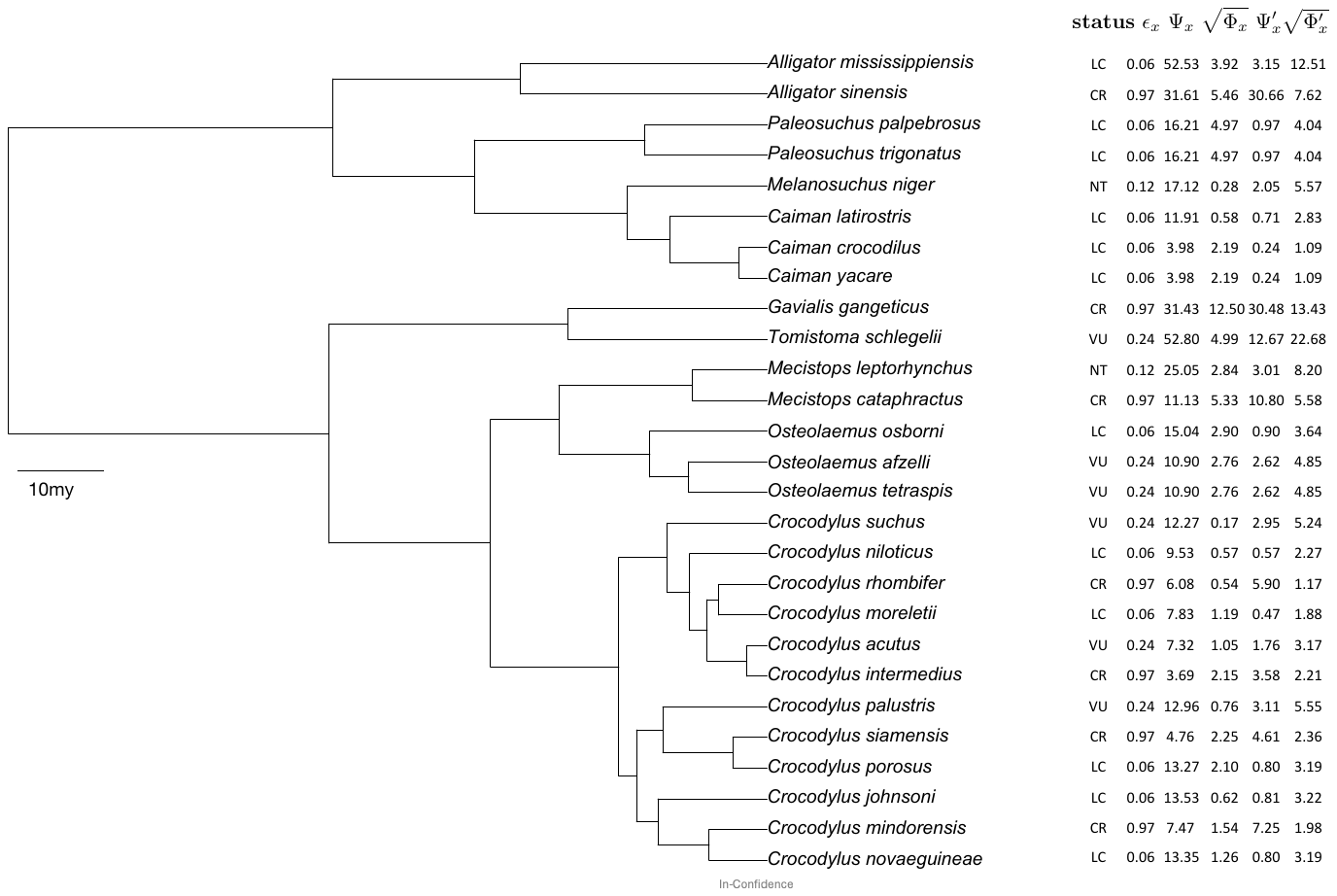}
     \caption{The expected values ($\Psi_x$ and $\Psi'_x)$ and standard deviations ($\sqrt{\Phi_x}$ and $\sqrt{\Phi'_x}$) of 
     $\Delta_\varphi(S_x,x)$
     and $\Delta'_\varphi(S_x,x)$, respectively, for the Crocodilian data-set. Both statistics vary widely across the species. See text for further details.}
     
    \label{fig:croc}
\end{figure}

There are several things to note. The values $\Psi$ and $\Psi'$ and their standard deviations all show a broad range across species, driven by their position on the tree and both their own and their close relatives' $\epsilon$ values. For instance, the first species, {\em Alligator mississippiensis}, is a widely-distributed Least Concern species that nevertheless carries a lot of genetic responsibility for its highly endangered sister species; this is captured in its particularly high  $\Psi$ score. Contrast this with its sister species {\em A. sinensis}, whose $\Psi$ score is much smaller; the $\Psi'$ scores are in reverse order due to the much larger extinction probability for {\em A. sinensis}.

Note also the relatively large value of the standard deviation value $\Phi'_x$ for {\em A. mississippiensis}. In general, species such as this with a low $\epsilon$ value  (and thus a low $\Psi'$ value) show standard deviations for $\Psi'$ that are higher than their expected value:
Proposition~\ref{pro1}(ii) implies that  $\sqrt{\Phi'_x} \geq \Psi'_x \sqrt\frac{1-\epsilon_x}{\epsilon_x}$. Thus,  if $\epsilon_x$ is less than $0.5$, then  the standard deviation of $\Delta'_\varphi(S_x,x)$ is larger than its mean (the converse need not hold).  Note also that $\Psi'_x$ and $\sqrt{\Phi'_x}$ both converge to 0 as $\epsilon_x \rightarrow 0$.

 For a species with a low $\epsilon$ value, by far the most common outcome of protecting it is no gain at all, but there remains the low probability that the intervention will produce high gains. For species at high risk of extinction, the outcome of intervention is more assured.  Because low-$\Psi'$ species (like the {\em Caiman spp.}) would not generally be considered primary targets for conservation~(\cite{gum23}),  the fact that the variance in their $\Phi'_x$ is large may not be particularly worrisome, but this does bear keeping in mind: lowering their $\epsilon_x$ values via conservation could avoid unexpected large losses of biodiversity.  

Fig.~\ref{fig:moments} provides some further insight into the behaviour of $\Psi'_x$ and its associated standard deviation  $\sqrt{\Phi'_x}$, again by considering the two alligator species that appear at the top of the tree in Fig.~\ref{fig:croc} ({\em A. mississippiensis} and {\em A. sinsensis}).  In the first case (for $x_1$= {\em mississipiensis}), the mean is small, and the standard deviation is large; in the second (for $x_2$ = {\em sinensis})), the reverse occurs (note that in Fig.~\ref{fig:moments}, there are a few other extremely small values (not shown) that have no significant effect). These two observations are explained by the way the most common outcomes (with probabilities of 0.94 and 0.91) move from 0 (for {\em mississipiensis}) to closer to the middle of the distribution (for {\em sinensis}) in Fig.~\ref{fig:moments}.

\begin{figure}[ht]
    \centering
    \includegraphics[width=0.6\linewidth]{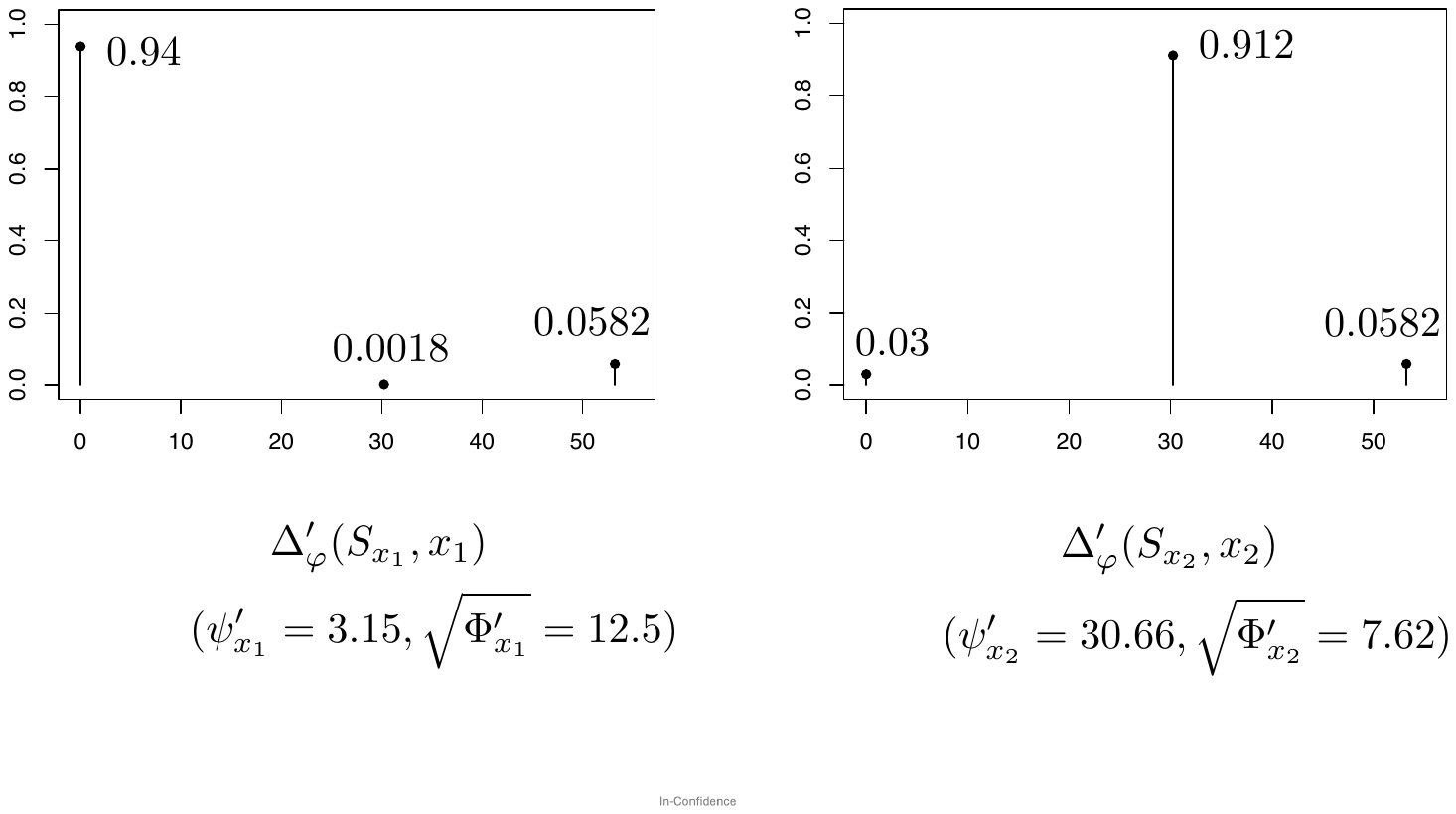}
     \caption{The distribution of $\Delta'_\varphi(S_x,x)$ for the two alligator 
     species at the top of Fig.~\ref{fig:croc} ($x_1=$ {\em Alligator mississippiensis} and $x_2=$ {\em Alligator sinensis}), together with their associated means and standard deviations. The $x$-axis is in units of millions of years.}
    \label{fig:moments}
\end{figure}

\subsection{Feature diversity}
Using this same collection of 27 Crocodilian species, we also computed diversity indices based on a small set of eight features (size, breeding behaviour, and ecological traits) taken from the literature, rather than based on a phylogenetic tree with branch lengths. We followed Proposition~\ref{pro2}, setting the value $\nu(f)$ of each feature $f$ equal to the same value (without loss of generality, $\nu(f)=1$), and so the diversity indices are simple polynomials of the $\epsilon$-values of non-focal species sharing features with the focal species. These products quickly become small. We therefore report the $z$-scores for the diversity indices (such that the sum of each diversity index across the species is 0, and any fixed $\nu(f)$ will produce the same results) rather than their raw values, cognizant that the raw scores may not be drawn from a Gaussian distribution.

This  dataset of haphazard crocodilian features is too small to glean much from the $z$-scores of the diversity indices and their variances. As expected, the only three species ({\em Crocodilus  niloticus}, {\em C. porosus}, and {\em C. acutus}) that express the rarest feature of living in salt water score highest for $\Psi$ (and for $\Psi'$); variation among these three is dictated by the product of the $\epsilon$-values of the other two species expressing the feature, such that {\em C. porosus} is the most replaceable among them. Generally, species that express more features are more irreplaceable (for example, the top 14 most irreplaceable species have an average of 3.1 features, while the bottom 13 express 1.9 features). It is also worth noting that none of the three {\em Crocodilus} rare-feature species are particularly irreplaceable if we use the phylogenetic tree to measure $\Psi$ (Fig.~\ref{fig:croc}).  We await better datasets to investigate the properties of the tree-free feature diversity indices in more detail.

\begin{table}[htbp]
\centering
%\caption{Crocodilian species, trait states, and network metrics.}
\label{tab:crocodiles_8_features}
\resizebox{\textwidth}{!}{%
\begin{tabular}{lccccccccccccc}
\hline
Scientific Name & $\epsilon_x$ & LB & HN & LCA & CN & NT & M & DS  & HS & $z-\Psi_x$ & $z-\Psi'_x$ & $z-\Phi_x$ & $z-\Phi'_x$ \\
\hline
{\em Alligator mississippiensis} & 0.06 & 1 & 0 & 1 & 0 & 0 & 0 & 0 & 0 & -0.311884 & -0.350856 & -0.312235 & -0.350891 \\
{\em Alligator sinensis} & 0.97 & 0 & 0 & 0 & 0 & 1 & 0 & 1 & 1 & -0.312676 & -0.343715 & -0.313038 & -0.343749 \\
{\em Caiman crocodilus} & 0.06 & 0 & 0 & 1 & 0 & 0 & 0 & 0 & 0 & -0.313289 & -0.352035 & -0.313660 & -0.352071 \\
{\em Caiman latirostris} & 0.06 & 0 & 0 & 1 & 1 & 0 & 0 & 0 & 0 & -0.300838 & -0.341588 & -0.301042 & -0.341626 \\
{\em Caiman yacare} & 0.06 & 0 & 0 & 0 & 0 & 0 & 0 & 1 & 1 & -0.309598 & -0.348938 & -0.309919 & -0.348974 \\
{\em Crocodylus acutus} & 0.24 & 1 & 1 & 1 & 1 & 0 & 1 & 0 & 0 & 0.618936 & 2.776612 & 0.628735 & 2.776941 \\
{\em Crocodylus intermedius} & 0.97 & 1 & 1 & 1 & 0 & 0 & 0 & 0 & 0 & -0.313161 & -0.350290 & -0.313529 & -0.350325 \\
{\em Crocodylus johnsoni} & 0.06 & 0 & 1 & 0 & 1 & 1 & 0 & 0 & 1 & -0.293205 & -0.335184 & -0.293025 & -0.334988 \\
{\em Crocodylus mindorensis} & 0.97 & 0 & 0 & 0 & 0 & 0 & 0 & 0 & 1 & -0.313244 & -0.351421 & -0.313614 & -0.351457 \\
{\em Crocodylus moreletii} & 0.06 & 0 & 0 & 0 & 0 & 0 & 0 & 0 & 1 & -0.312558 & -0.351421 & -0.312918 & -0.351457 \\
{\em Crocodylus niloticus} & 0.06 & 1 & 1 & 0 & 1 & 1 & 1 & 1 & 0 & 3.424797 & 2.784318 & 3.424462 & 2.785023 \\
{\em Crocodylus novaeguineae} & 0.06 & 0 & 0 & 0 & 0 & 0 & 0 & 0 & 1 & -0.312558 & -0.351421 & -0.312918 & -0.351457 \\
{\em Crocodylus palustris} & 0.24 & 1 & 1 & 1 & 1 & 0 & 0 & 0 & 0 & -0.309657 & -0.339842 & -0.309957 & -0.339810 \\
{\em Crocodylus porosus} & 0.06 & 1 & 0 & 0 & 0 & 0 & 1 & 0 & 0 & 3.402485 & 2.765598 & 3.400352 & 2.764557 \\
{\em Crocodylus rhombifer} & 0.97 & 1 & 1 & 1 & 0 & 0 & 0 & 0 & 0 & -0.313161 & -0.350290 & -0.313529 & -0.350325 \\
{\em Crocodylus siamensis} & 0.97 & 0 & 0 & 0 & 0 & 0 & 0 & 1 & 1 & -0.313061 & -0.348938 & -0.313429 & -0.348974 \\
{\em Crocodylus suchus} & 0.24 & 0 & 1 & 1 & 1 & 1 & 0 & 0 & 1 & -0.308269 & -0.335184 & -0.308501 & -0.334988 \\
{\em Gavialis gangeticus} & 0.97 & 1 & 1 & 0 & 1 & 0 & 0 & 1 & 0 & -0.312208 & -0.337359 & -0.312558 & -0.337327 \\
{\em Mecistops cataphractus} & 0.97 & 0 & 0 & 0 & 0 & 0 & 0 & 1 & 0 & -0.313107 & -0.349553 & -0.313474 & -0.349588 \\
{\em Mecistops leptorhynchus} & 0.12 & 0 & 0 & 0 & 0 & 1 & 0 & 0 & 0 & -0.310177 & -0.346812 & -0.310505 & -0.346848 \\
{\em Melanosuchus niger} & 0.12 & 1 & 0 & 1 & 0 & 0 & 0 & 0 & 0 & -0.312587 & -0.350856 & -0.312947 & -0.350891 \\
{\em Osteolaemus afzelli} & 0.24 & 0 & 1 & 1 & 0 & 1 & 0 & 0 & 0 & -0.311565 & -0.346246 & -0.311872 & -0.346150 \\
{\em Osteolaemus osborni} & 0.06 & 0 & 1 & 0 & 0 & 1 & 0 & 1 & 0 & -0.303430 & -0.343763 & -0.303506 & -0.343666 \\
{\em Osteolaemus tetraspis} & 0.24 & 0 & 1 & 1 & 0 & 0 & 0 & 0 & 0 & -0.313121 & -0.351469 & -0.313489 & -0.351505 \\
{\em Paleosuchus palpebrosus} & 0.06 & 0 & 0 & 1 & 0 & 0 & 0 & 1 & 1 & -0.309598 & -0.348938 & -0.309918 & -0.348974 \\
{\em Paleosuchus trigonatus} & 0.06 & 0 & 0 & 1 & 0 & 0 & 0 & 1 & 0 & -0.310330 & -0.349552 & -0.310661 & -0.349588 \\
{\em Tomistoma schlegelii} & 0.24 & 1 & 0 & 0 & 0 & 0 & 0 & 0 & 0 & -0.312938 & -0.350856 & -0.313304 & -0.350892 \\
\hline
\end{tabular}}
\caption{Crocodilian species traits and  $z$-scores of feature diversity statistics. Abbreviations: LB=Large-bodied,  HN=Hole Nesting, LCA=Loss of Cathemeral Activity, CN=Communal Nesting, NT=Not Terrestrial, M=Marine habitat use, DS=Diet Specialism, HS=Habitat Specialism.}
\label{tab:croc_traits}
\end{table}

\section{Diversity Measures on Phylogenetic Networks}
\label{nets}

Although PD was originally defined for phylogenetic trees, different extensions to phylogenetic networks have been proposed (see, e.g., \cite{bor22, kon25,  cor18, wik18}). We review some of them below.

A \emph{rooted phylogenetic network} $N$ on $X$ is a rooted directed acyclic graph with no parallel edges satisfying the following properties:
\begin{enumerate}[(i)]
    \item The (unique) root has in-degree zero and out-degree one;
    \item A vertex with out-degree zero has in-degree one, and the set of vertices with out-degree zero is $X$ (these are the \emph{leaves}); and
    \item All other vertices have either in-degree one and out-degree at least two (these are called \emph{tree vertices)}, or in-degree at least two and out-degree one (these are called \emph{reticulations}).
\end{enumerate}

An example of a rooted phylogenetic network is shown in Fig.~\ref{fig:tree-network}.

Furthermore, we assume that each edge of $N$ has a non-negative real-valued length. Thus, if $E$ denotes the edge set of $N$, there is a function $\ell: E \rightarrow \mathbb{R}^{\geq 0}$ for which each edge $e$ of $N$ is assigned the weight $\ell(e)$.

As with trees, we formally assign a unique feature $f_e$ to each edge $e$,  and let $\nu(f_e)$ equal the length of the edge $e$. Then $\varphi_{(\FF, \nu)}$ is  the `subnet phylogenetic diversity' (subnet-PD) measure described in \cite{wik18} and studied further in \cite{cor18} and, under the name of `AllPaths-PD', in \cite{bor22}. Formally, given a phylogenetic network $N$ on $X$ and a subset $S \subseteq X$ of taxa, the subnet-PD of $S$ is the sum of lengths of all edges of $N$ that lie on some path from the root to a leaf in $S$. For example, considering the network $N$ in Fig.~\ref{fig:tree-network} and $S = \{3,4\}$, the subnet phylogenetic diversity of $S$ equals $\ell_3 + \ell_4 + \ell_6 + \ell_7 + \ell_8 + \ell_9 + \ell_{10} + \ell_{11} + \ell_{12}$.

\paragraph{Other notions of phylogenetic diversity on networks.}
While subnet-PD provides a natural extension of Faith's PD to phylogenetic networks, it is based on the assumption that all features arising along any root-to-leaf path are inherited by the corresponding taxon. In particular, in the presence of reticulation (hybridisation) events, this implies that a hybrid species inherits the full set of features from each of its parents.

Alternative models relax the assumption that all features are fully inherited at reticulations by allowing partial inheritance. In these models, each incoming edge of a reticulation is assigned an inheritance probability, and the contribution of a feature to a descendant taxon is weighted accordingly. 

More precisely, following \cite{bor22}, for each edge $e=(u,v)$ directed into a reticulation $v$, let $p(e) \in [0,1]$ denote the proportion of features at the parent vertex $u$ that are inherited by the child vertex $v$. For a subset $S \subseteq X$ and an edge $e = (u,v)$ of $N$, let $\gamma(S,e)$ denote the proportion of features present at $v$ that are inherited by at least one taxon in $S$. Equivalently, $\gamma(S,e)$ can be interpreted as the probability that a feature arising on edge $e$ is inherited by some taxon in $S$.

The resulting diversity measure, termed \emph{Network-PD} in \cite{bor22}, is defined by
\[ \textup{Network-PD}_{N, p}(S) = \sum\limits_{e \in N} \gamma(S,e) \cdot \ell(e).\]
The values $\gamma(S,e)$ can be computed recursively in a bottom-up manner (for details, see \cite{bor22}). We note that when $p(e)\equiv1$ for all reticulation edges, Network-PD reduces to AllPaths-PD (and hence subnet-PD).

In some settings, it is natural to assume that at a reticulation with incoming edges $e_1,e_2,\ldots, e_k$, the inheritance proportions satisfy $p(e_1) + p(e_2) + \ldots + p(e_k)=1$. Under this assumption, upper and lower bounds for Network-PD can be obtained via the notions of \emph{MaxWeightTree-PD} and \emph{MinWeightTree-PD}, introduced in \cite{bor22}. Informally, these measures quantify diversity by considering trees displayed by the network (obtained by selecting one incoming edge at each reticulation), and then taking the maximum or minimum phylogenetic diversity over all such trees. A related measure, \emph{AverageTree-PD} introduced by \cite{ier25}, instead computes the expected diversity over these trees, weighing each tree by its probability (obtained as the product of the inheritance probabilities associated with the edges leading into reticulations that are retained in the displayed tree). We refer the reader to \cite{bor22,ier25} for more formal definitions and further details.

We note that several of the above diversity measures have recently been implemented in \texttt{PaNDA} (Phylogenetic Network Diversity Algorithms), a Python software package with an interactive graphical user interface for exploring, visualising, and optimising diversity in phylogenetic networks (\cite{hol25}). This development is promising, as much of the existing work on network-based diversity measures has been driven primarily by mathematical and theoretical computer science considerations (see, e.g., \cite{bor22,cor24,ier25,ier25b,jon23} and references therein). Consequently, an important direction for future research is to assess the practical performance of these measures and to determine which notions of phylogenetic diversity on networks are most informative and biologically meaningful in empirical settings.

\medskip
In what follows, however, we restrict attention to the subnet-PD setting, as it admits a direct correspondence with feature diversity indices.
Consider an arbitrary assignment of features to the species in $X$. It is easy to see that such an assignment cannot always be realised by an assignment of features to the edges of a tree (where each feature arises just once on some edge and is inherited by all species below it). 
For example, if the leaf species $1,2,3$ have the feature sets $\{f_1, f_2\}, \{f_2, f_3\}$, and $\{f_1, f_3\}$, then no tree can represent this assignment of features under these assumptions.

This leads to the following question: can every assignment of features to the species in $X$ be realised by some phylogenetic network (i.e. the set of features present in the leaf species $x \in X$ is the union of the features present on edges that lead to  leaf $x$ and, again, each feature arises just once on some edge and is never lost). 

The answer to this questions is easily seen to be `yes'. To see why, let 
$X = \{1, \ldots, n\}$ and let  $X_f = \{x \in X: f \in F_x\}$ denote the subset of taxa that have the feature $f$. Also, let  $\mathcal{C}_\mathcal{F} = \{X_f: f \in \mathcal{F}\}$ be the collection of the sets $X_f$. Construct the Hasse diagram for $\mathcal{C}_\mathcal{F} \cup X \cup \{1\} \cup \ldots \cup  \{n\}$, inserting additional edges to ensure that each reticulation has out-degree 1 to obtain a phylogenetic network $N_{\FF}$. Then the underlying feature distribution can be realised on $N_{\FF}$ by letting feature $f$ arise on the edge directed into the vertex representing the set $X_f$ (i.e. whose cluster is $X_f$). If this is a reticulation, it can be placed on any of the incoming edges or, alternatively, on the outgoing edge.
Returning to the example above, suppose that $X = \{1,2,3\}$ with $\alpha_1 = \{f_1, f_2\}$, $\alpha_2 = \{f_2, f_3\}$, and $\alpha_3 = \{f_1, f_3\}$. Then, $X_{f_1} = \{1,3\}$, $X_{f_2} = \{1,2\}$, and $X_{f_3} = \{2,3\}$, and the rooted phylogenetic network $N_\FF$ shown in Fig.~\ref{fig:NF} realises this feature distribution.

\begin{figure}[ht]
    \centering
    \includegraphics[width=0.36\linewidth]{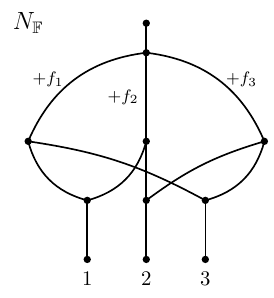}
     \caption{A rooted phylogenetic network $N_\FF$ realising the feature distribution $\alpha_1 = \{f_1, f_2\}$, $\alpha_2 = \{f_2, f_3\}$, and $\alpha_3 = \{f_1, f_3\}$.}
    \label{fig:NF}
\end{figure}

The following result uses this construction to show that the class of  subnet-PD indices on phylogenetic networks is precisely the same class of diversity indices as the class of feature diversity indices.

\begin{prop}
\label{identical}
An arbitrary diversity index $\varphi$ on $X$ can be represented as a feature diversity index $\varphi_{(\FF, \nu)}$ if and only if $\varphi$ is the subnet-PD  of some edge-weighted phylogenetic network on leaf set $X$.
\end{prop}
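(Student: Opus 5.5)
The plan is to prove the two directions separately, and each reduces to exhibiting an explicit translation between the two kinds of data.

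\emph{Subnet-PD $\Rightarrow$ feature diversity.} Let $N$ be an edge-weighted rooted phylogenetic network on $X$ with edge set $E$ and lengths $\ell$. Take $\FF=\{f_e: e\in E\}$ and set $\nu(f_e)=\ell(e)\ge 0$. Let $\alpha(x)$ be the set of $f_e$ such that $e$ lies on some directed path from the root to $x$. Then, for any $S\subseteq X$, an edge $e$ lies on some root-to-leaf path ending in $S$ exactly when $f_e\in\bigcup_{s\in S}\alpha(s)$. Hence the subnet-PD of $S$ equals $\varphi_{(\FF,\nu)}(S)$. This is just the observation already made in the text before the proposition.

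\emph{Feature diversity $\Rightarrow$ subnet-PD.} Start from $(\FF,\nu,\alpha)$ and build the network $N_\FF$ from the Hasse diagram of $\mathcal{C}_\FF\cup\{X\}\cup\{\{1\}\}\cup\dots\cup\{\{n\}\}$, ordered by inclusion, as described above. Attach a root edge above the vertex for $X$, and refine reticulations so each has out-degree one. Also suppress or repair any vertex with in-degree one and out-degree one, so that conditions (i)--(iii) hold; such a vertex can arise when a cluster has a single cover.

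Several features may share the same cluster $X_f$. For each cluster $C$, let the edge $e_C$ directed into (or out of, for reticulations) the vertex of $C$ carry weight $\sum_{f: X_f=C}\nu(f)$. Features with $X_f=\emptyset$ contribute nothing to $\varphi_{(\FF,\nu)}$, so they can be discarded. All other edges get weight $0$, and non-negativity is preserved.

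The key fact to prove is that in this DAG a leaf $x$ is reachable from the vertex of cluster $C$ if and only if $x\in C$. Reachability in a Hasse diagram is exactly the order relation, and the leaf $x$ corresponds to the singleton $\{x\}$. Given that fact, the edge $e_C$ lies on a root-to-$S$ path if and only if $C\cap S\neq\emptyset$. This in turn holds if and only if some feature $f$ with $X_f=C$ lies in $\bigcup_{s\in S}\alpha(s)$. Summing the edge weights then gives subnet-PD$(S)=\varphi_{(\FF,\nu)}(S)$.

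\emph{Main obstacle.} The delicate step is the graph-theoretic cleanup. I must make sure the Hasse diagram, after the modifications, really is a rooted phylogenetic network, and that these modifications do not change reachability. The modifications are: splitting reticulations with out-degree greater than one into a reticulation followed by a tree vertex, contracting degree-$(1,1)$ vertices while transferring their weight, and ensuring leaves have in-degree one, by inserting an extra vertex above a singleton that has several covers. There should also be no parallel edges.

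Two cases need extra care. If a singleton $\{x\}$ itself equals some $X_f$, its weight is placed on the leaf edge. If a vertex has exactly one child, it is contracted into that child, which is harmless because the cluster sets are equal only when the vertices coincide. I would check each local operation separately: edge subdivision or contraction preserves the set of leaves below every edge, so the feature-to-edge correspondence, and hence the subnet-PD values, are unchanged.
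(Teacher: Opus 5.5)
Your proposal is correct and follows the paper's proof: the `if' direction is the one-feature-per-edge observation, and the `only if' direction builds the Hasse-diagram network $N_{\FF}$ and weights each edge by the total $\nu$-value of the features whose cluster it carries. Your cleanup steps (root edge, splitting reticulations, inserting a reticulation above a multiply-covered singleton) fill in details the paper leaves implicit; note that degree-$(1,1)$ vertices never actually arise, since a cluster with at least two elements contains at least two singletons and so has at least two children in the Hasse diagram.
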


{\em Proof.}
The proof of the `if' direction was outlined earlier in this section. For the `only if' direction, suppose that $\varphi=\varphi_{(\FF, \nu)}$. Construct the network $N_{\FF}$ described above (with the assignment of features to the edges of $N_{\FF}$) and set the length of edge $e$ of $N_{\FF}$ equal to the sum of the $\nu(f)$ values that have feature $f$ assigned to that edge (where the length of $e$ is zero if no feature arises on $e$). Then the subnet diversity of this resulting edge-weighted network is $\varphi$.

Proposition~\ref{identical} implies that in Fig.~\ref{nest} the classes `Feature diversity' and `Phylogenetic diversity on networks' are actually equivalent as sets, despite their somewhat different descriptions.

\section{Concluding Comments}

The amount of additional phylogenetic diversity that can be saved if a particular species $x$ is protected against extinction generally depends on several factors, including (i) the position of species $x$ in the phylogenetic tree and the lengths of the branches leading to it, (ii) the fate of other species (captured in their assigned $\epsilon$-values), and (iii) the probability ($\epsilon_x$) that species $x$ will become extinct if it is not protected. Under the simple (generalised) field-of-bullets model, the expected value of this additional phylogenetic diversity (i.e. the expected gain following conservation) can be easily computed, and corresponds to the EDGE2 score (i.e. the HEDGE value) for this species~(\cite{gum23}). Moreover, it is the product of two terms, one of which is simply the extinction probability of species $x$ itself ($\epsilon_x$)~(\cite{ste07}). In this paper, we have generalised this setting to arbitrary diversity functions (which may apply to trees or phylogenetic networks, or be completely independent of phylogenetic considerations). We observe that the same factorisation applies for the expected value of the additional diversity, which allows us to provide an exact expression for the standard deviation of this random variable.

We also explored the relationships between feature diversity, phylogenetic diversity, and other types of diversity measures (on trees and networks). Feature diversity is closely related to phylogenetic diversity on trees provided that features arise only once in the tree and are never lost. However, without these last assumptions there is no direct connection between the two diversity measures~(\cite{wik21, ros23}).  This may help to explain some empirical findings, particularly those that find weak correlations between tree-based and feature-based measures of diversity~(\cite{kel14, maz18}; see also \cite{hah24}). On the other hand, when we moved from phylogenetic trees to phylogenetic networks we established a much more direct connection between subnet diversity  and feature diversity (Proposition~\ref{identical}). For feature diversity, we also derived more transparent  expressions for the expected values ($\Psi, \Psi'$) and their corresponding variances $(\Phi, \Phi')$.

We then applied our results to the phylogenetic tree of, and a small set of features for, extant Crocodilians, to derive the $\Psi$ and $\Psi'$ scores and compare these alongside the corresponding standard deviation quantities ($\sqrt{\Phi}$ and $\sqrt{\Phi'}$). We discussed their behaviour as the extinction probability ($\epsilon$) varies, and noted that as random variables, these metrics are perforce estimated with considerable imprecision. Armed with estimates of that imprecision (the expected variances), practitioners can now ask if two species really differ in, e.g. their EDGE2 scores such that they should be given differential attention. The observation that species with high probabilities of extinction (high $\epsilon$) tend to have more precise diversity metric estimates (other things being equal) offers support for choosing among them to focus conservation effort. In contrast, the knowledge that species with low probabilities of extinction (low $\epsilon$) tend to have imprecisely estimated scores is also helpful: for species that score low for $\Psi$ (and so $\Psi'$), relative ranks will be imprecise and so unhelpful. For species with low probabilities of extinction but that are nonetheless -- because close relatives have high $\epsilon$ --  irreplaceable, imprecision may warrant an opposite stance: in a maximum-loss avoidance strategy setting, forward-looking activity to drop $\epsilon$ to zero may be warranted (see also \cite{fai15}).

With reference to Crocodylia, our results can complement work done by the ``Edge of Existence" programme at the Zoological Society of London, which has previously highlighted the critically endangered gharial ({\em Gavialis gangeticus}) as a top-ranked EDGE vertebrate species.  The other high-ranking species in the Crocodilia using $\Psi'$ in Fig.~\ref{fig:croc} is {\em A. sinensis}, a critically endangered species under strong conservation management in China. Each is in a two-species family, and each of their distantly-related sister-taxa  have top-ranked $\Psi$ (irreplaceability) scores: based on our analysis, the false gharial {\em Tomistoma schlegelii} might warrant particular attention because it has the third highest EDGE2 ($\Psi'$) score in the order and its score is by far the most imprecise. Were the gharial to go extinct the expected gain accrued by conserving the false gharial could be very high indeed. Highlighting this aspect of its value could lead to more conservation attention.

In future work, it would be interesting to explore extensions of our results to processes more general than the g-FOB model, as well as applying general feature diversity measures to conservation-relevant feature datasets and settings where networks rather than trees better represent relationships.  

Regarding the first extension (to allow  extinction risk to depend on the extinction of other species),  the earlier expression for $\Psi_x$ given in Proposition~\ref{pro2} needs to be modified.   First observe that if  $E(y)$ is the event that species $y$ becomes extinct, then the earlier expression for  $\Psi_x = \EE[\Delta'_\varphi(S_x, x)]$ becomes 
$$\Psi_x =\sum_{f \in \alpha(x)}\nu(f) \cdot \PP\left(\bigcap_{y \in X\setminus \{x\}: f\in \alpha(y)}E(y)\right).$$
Now, if feature $f \in \alpha(x)$ is present in at most one other species than $x$, then 
the term $\PP\left(\bigcap_{y \in X\setminus \{x\}: f\in \alpha(y)}E(y)\right)$ is the same as it was in the earlier analysis where independence was assumed.
On the other hand, if $f$ is a feature present in at least two other species of $X\setminus \{x\}$, then for any ordering of these species that have feature $f$, say,  $y_1, y_2, \ldots, y_k$ (where $k$ depends on $f$), the term $\PP\left(\bigcap_{y \in X\setminus \{x\}: f\in \alpha(y)}E(y)\right)$ equals $\PP(E(y_1))$ multiplied by the product of the following conditional probabilities: $\prod_{j=2}^{k} \PP\left(E(y_j)|\bigcap_{i=1}^{j-1} E(y_i)\right)$. In particular, if species extinction events tend to increase the probability that other species will become extinct, then this product term will be larger than the product of the unconditional extinction probabilities  (the $\epsilon_y = \PP(E(y))$ values), and this dependency will increase the value of $\Psi_x$ over what it would have been for independent extinctions.

Finally, if our goal is to maximise the conservation of total diversity, all these species-specific indices should be viewed as heuristics; how well prioritising based on these indices do at such maximisation remains an open empirical question.

\section{Acknowledgements}
We thank the NZ Marsden Fund (23-UOC-003) and the NSERC Canada Discovery Program for research support, and Alex Pyron for help sourcing Crocodile data.
We also thank the three anonymous reviewers for their helpful comments that improved both the technical presentation and the framing of this work.

%\section{Supplementary Material}
%Data and sources for the Crocodylia analyses available from the Dryad Digital Repository:
%https://doi.org/10.5061/dryad.v9s4mw79j

\end{document}